\newtheorem{theorem}{Theorem}
\newtheorem{lemma}{Lemma}
\newtheorem{corollary}{Corollary}
\theoremstyle{definition}
\newtheorem{definition}{Definition}
\newtheorem{example}{Example}
\begin{document}

\title[Semi-Directed Networks from Quarnets]{Constructing Semi-Directed Level-1 Phylogenetic Networks from Quarnets}


\author{Sophia Huebler} \address{University of Utah}
\author{Rachel Morris}
\address{North Carolina State University}

\author{Joseph Rusinko}
\address{Hobart and William Smith Colleges}

\begin{abstract} 

Semi-directed networks provide a graphical structure for describing the evolutionary history of organisms in the presence of hybridization. We introduce two algorithms for reconstructing semi-directed level-1 phylogenetic networks from their complete set of 4-leaf subnetworks, known as quarnets. The sequential algorithm begins with a single quarnet and adds one leaf at a time until the entire network has been reconstructed. The cherry-blob algorithm is a novel approach inspired by cherry-picking techniques on trees.

\end{abstract}
\maketitle


\section*{Introduction}

Phylogenetics explores evolutionary relationships among species. Historically, these relationships have been depicted using phylogenetic trees, where branching points represent the divergence of new species. However, trees are insufficient to capture the uncertainty in genetic data due to noise or incomplete lineage sorting, nor can they model reticulation events such as hybridization and horizontal gene transfer \cite{BAPTESTE2013439}. As such, research has begun to use phylogenetic networks, which can include one or more cycles, to more accurately model evolution. Given that the class of networks with $n$ leaves is infinite, most research assumes that phylogenetic networks follow specific structural constraints. Many classes of network restrictions have been proposed in this context, including restrictions on the number of reticulation events that can appear on the same connected component, which defines the level of the network. Other common features used for classification are the total number of reticulation events and whether the edges are directed, undirected, or semi-directed. An excellent survey of the implications of these choices can be found in \cite{kong2022classes}.

To study phylogenetic trees or networks, one must first estimate them from genetic data. However, directly estimating a tree or network can be resource-intensive when there are large numbers of taxa. One approach to addressing this challenge is to infer evolutionary history on subsets of the data and glue this information together to understand the evolutionary history of all species of interest. Subsets of four taxa are frequently preferred as base units. As a result \emph{quartets}, or 4-leaf trees, have been widely used as building blocks for phylogenetic tree reconstruction \cite{casanellas2023designing,chifman2014quartet,larget2010bucky, astral,rusinko2012invariant,strimmer1996quartet}. Phylogenetic tree reconstruction typically involves two steps, with algorithms often focusing on one of the two tasks: first, inferring relationships among four-taxa subsets from genetic data and then constructing a tree or network that reflects these relationships. Leveraging existing methods to determine quartets from genetic data, various algorithms to reconstruct phylogenetic networks also choose to use quartets as the building blocks \cite{allman2019nanuq, Allman2024_NANUQ+, gambette2012quartets,keijsper2014reconstructing,solis-lemusSnaQ}. These algorithms range from the SNaQ pseudolikelihood method that reconstructs a semi-directed network \cite{solis-lemusSnaQ} to NANUQ and NANUQ+ which infer the network topology and network, respectively, by estimating the distance between pairs of taxa \cite{allman2019nanuq, Allman2024_NANUQ+}. An additional reconstruction algorithm uses quartets as an intermediary in estimating networks from SNP data \cite{warnow_statistically_2024}.

While many phylogenetic networks can be inferred by quartets, networks that display complex relationships between taxa as semi-directed cycles cannot be  uniquely determined from their displayed quartets \cite{banosIdentifying}. 
Using unrooted four-leaf networks called \emph{quarnets} that allow for semi-directed cycles yields stronger theoretical guarantees in reconstructing phylogenetic networks. In particular, recent work demonstrates that all level-2 semi-directed networks are encoded by their quarnets \cite{huber2025sufficientreconstruct}. Various works also provide means to infer quarnets from DNA sequences, making network reconstruction from quarnets possible. For the Juke-Cantor model, statistical learning via SVMs has been successful at learning complete level-1 quarnets \cite{barton2022statlearning}. Another method uses algebraic invariants to infer quarnets and to identify the network topology \cite{MartinAlgInvariants}.

Related to reconstruction algorithms is the concept of network identifiability, which refers to whether an evolution history can be inferred from phylogenetic models. For the Jukes-Cantor model, one can identify the topology of the 4-leaf network and the orientation of some, but not all, edges \cite{gross}. Identifiability in the case of a single reticulation event on a cycle of at least length 4 has been demonstrated for the K2P and K3P models \cite{HolleringSullivant}. Follow-up work proved the identifiability of triangle-free level-1 phylogenetic networks for all three previously mentioned models \cite{GrossDistinguishingL1}.

In this article, we propose two approaches for reconstructing semi-directed network $N$ by piecing together the complete collection of 4-leaf subnetworks of $N$. The \emph{sequential algorithm} iteratively adds leaves to a base network, serving as a generalization of quartet-puzzling \cite{strimmer1996quartet}. The second method, which we refer to as the \emph{cherry-blob algorithm}, generalizes cherry-picking on trees \cite{cherrypicking2} by recursively identifying external structures of a network. Both methods run in polynomial time in $n$, the number of leaves. For an alternative approach to reconstructing networks from subsets of displayed quarnets, see \cite{frohn_reconstructing_2025}. A fourth algorithm, Squirrel, provides an additional method to reconstruct triangle-free level-1 networks for a full set of quarnets \cite{HoltgrefeSQUIRREL}.

The structure of this article is as follows: Section~\ref{sec:Background} introduces the necessary definitions and notation, Section \ref{sequential} presents the sequential algorithm, and finally, Section~\ref{Cherry-Blob} describes the cherry-blob algorithm.

\section{Background}
\label{sec:Background}

\subsection{Semi-directed, binary level-1 phylogenetic networks}

\label{networkDef}
We adopt the network language and notation from \cite{gross}. A network $N=(V, E)$ consists of a collection of vertices $V$ and unordered pairs of vertices that form the edge set $E$. The \emph{degree} of a vertex is given by the number of edges incident to the vertex. Any vertex that has degree one is a \emph{leaf} while any vertex of higher degree is an \emph{internal vertex}. In a \emph{binary} network, all internal vertices are of degree three, except potentially the \emph{root} $\rho$ which has degree two. The set of leaves or support of a network $N$ is denoted equivalently by $supp(N)$ or $X$ if the network is clear from context. A network can be \emph{directed} where all edges are given an arrow representing a direction of flow, \emph{semi-directed} where only certain edges are assigned a direction, or \emph{undirected}. In directed and semi-directed networks, vertices may have an in-degree given by the number of incident inward-flowing edges and an out-degree given by the number of incident outward-flowing edges.

A \emph{phylogenetic network} is a rooted, binary, acyclic, directed graph that satisfies the following properties:  parallel edges are suppressed,  the root has out-degree 2,  all leaves have in-degree 1, and all internal vertices have either in-degree 1 and out-degree 2 or out-degree 1 and in-degree 2. In a phylogenetic network, a reticulation event is represented by a \emph{reticulation vertex}, which has in-degree two and out-degree one. Reticulation events appear in cycles in the underlying topology of the network. The two edges directed into a reticulation vertex are called \emph{reticulation edges}, and the edge directed away from the reticulation vertex is referred to as an \emph{out edge}. A leaf connected to an out edge is called an \emph{out leaf}. A \emph{semi-directed phylogenetic network} is obtained from a phylogenetic network by undirecting all edges except for the reticulation edges and suppressing all degree two vertices and parallel edges. 

A graph is \emph{connected} if there exists a path between any two vertices. A graph is \emph{biconnected} if deleting any vertex results in a graph that remains connected. A \emph{biconnected component} is a maximal subgraph that is biconnected. A network is \emph{level-$k$} if, in every biconnected component, there are at most $k$ reticulation vertices. We will be considering level-1 networks, which include the set of phylogenetic trees as well as networks whose cycles are disjoint. Henceforth, unless otherwise specified, the term \emph{network} will refer to an unrooted, binary, level-1, semi-directed phylogenetic network.

\subsection{Restrictions to Quarnets}

Our reconstruction algorithms identify a network's features from network restrictions to a smaller leaf set, following the strategy outlined by Gross and Long \cite{gross}, which we summarize as the Network Restriction Algorithm.

\begin{algorithm}[h]
\label{restrictions}
\DontPrintSemicolon
 \KwInput{An unrooted semi-directed network $N$, with leaf set $X$, let $Y \subseteq X$.}
 \KwOutput{The restriction of $N$ to leaf set $Y$, $N|_{Y}$.}
 Place a root $\rho$ on a valid root location. (See \cite{validroot} for an algorithm.)
 
 Direct all undirected edges away from $\rho$.
 
 Find the union of all directed paths from $\rho$ to all $y\in Y$.

 Remove all vertices and edges not on any of the directed paths.

 \While{Restriction has non-root degree 2 vertices or parallel edges}{
 Suppress all degree 2 vertices except for $\rho$.

 Remove all parallel edges.
 }
 
 Suppress $\rho$.
 
 Undirect all edges that are not reticulation edges.

\Return{Semi-directed network $N|_{Y}$ }
\caption{Network Restriction}
\end{algorithm}

In general, a network $N$ \emph{displays} another network $N'$ if $N|_{supp(N')}=N'$. If we consider a set of restricted networks, a network that displays all the restrictions is called a \emph{parent network}. In general, a collection of restricted networks may have multiple parent networks. However, the results of \cite{huber2025sufficientreconstruct} can be interpreted as follows: the set of all displayed quarnets of a semi-directed level-1 network has a unique parent.

\begin{example}
     We consider a semi-directed phylogenetic network and six of its displayed quarnets in Figure~\ref{fig:quarnets}. The network displays 70 quarnets but only six semi-directed quarnet topologies up to relabeling the leaves. Every quarnet restriction of any semi-directed phylogenetic network shares the same semi-directed topology as one of these six quarnets. We refer to these topologies as quartet trees, single triangles, squares, and double triangles. 
\end{example}

\begin{figure}[ht]
 \centering
 \includegraphics[width = 340pt]{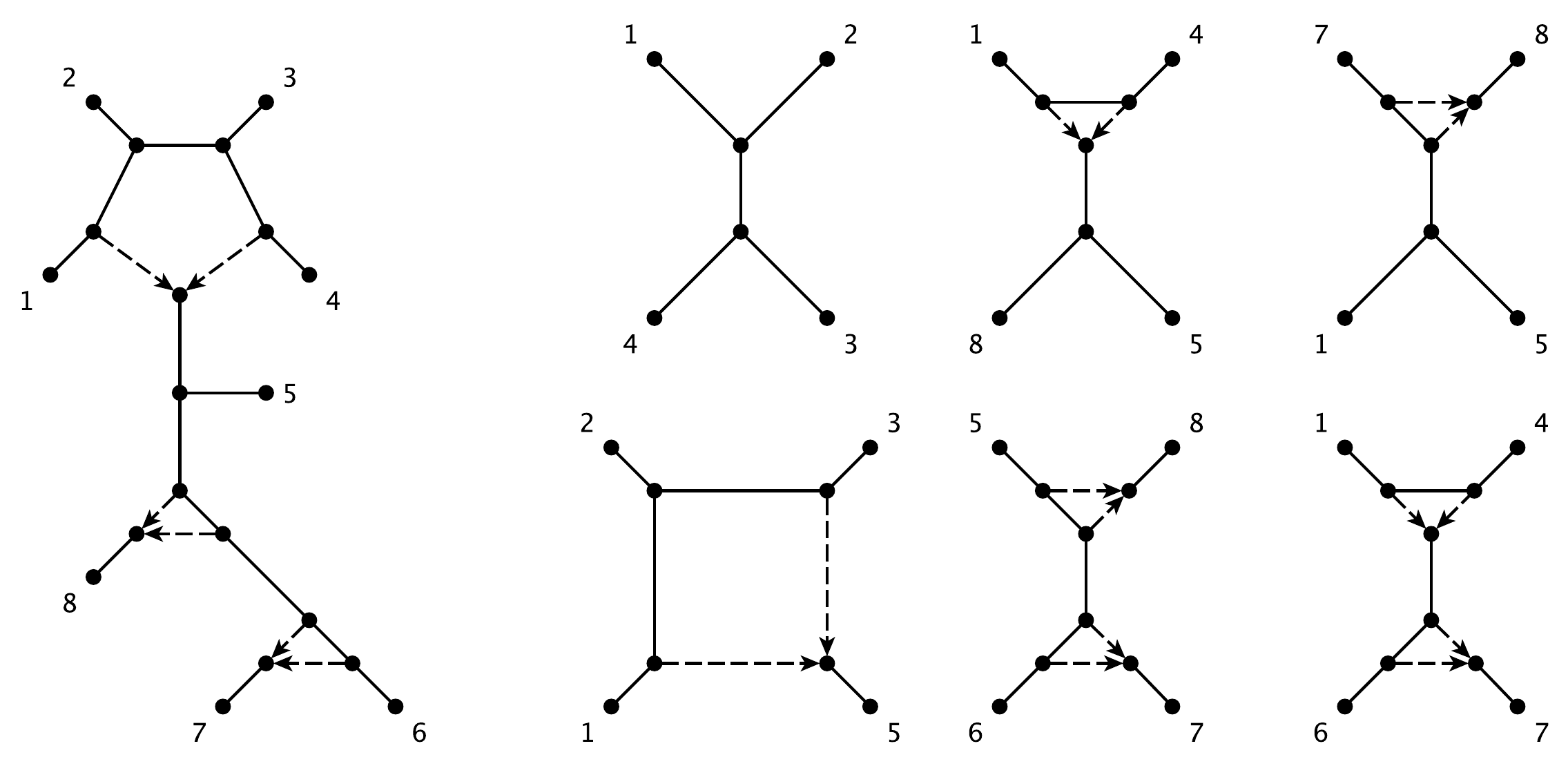}
 \caption{ A phylogenetic network and six of the quarnets it displays.} 
 \label{fig:quarnets}
\end{figure}

We denote a set of quarnets by $\mathcal{Q}$. If quarnets in a set can appear as restrictions of a shared parent network, we say $\mathcal{Q}$ is a \emph{compatible} set of quarnets. The set of all quarnets displayed by a parent network $N$ is called the \emph{complete quarnet set}, denoted by $\mathcal{Q}(N)$. If we want to consider all quarnets in $\mathcal Q$ that contain the leaf $x\in X$, we denote this subset by $\mathcal Q_x$. 
The two main algorithms in this article aim to reverse the restriction process by reconstructing a network from a complete quarnet set.

\subsection{Local Network Components}
Our proposed algorithms for reconstructing networks from quarnets depend on identifying local network components. Biconnected components of a network are referred to as \emph{semi-directed cycles}. A \emph{blob} typically refers to a biconnected component with at least 3 vertices \cite{validroot}. For this article, we further extend the definition to include leaves connected by a single trivial cut edge such that a blob is redefined as the subgraph consisting of a semi-directed cycle and all leaves adjacent to vertices on it. The term \emph{cycle} here refers to the underlying undirected cycle. A vertex $v$ is said to be \emph{on a cycle} if the vertex sequence of the undirected cycle includes $v$, and similarly, an edge $e$ is \emph{on a cycle} if the edge sequence of the undirected cycle includes $e$.

A \emph{cut edge} is an edge in $N$ whose removal disconnects $N$ into two components. The cut edge necessarily partitions the leaves into two sets. A cut edge is \emph{trivial} if one of these sets contains only a single leaf. If removing a cut edge disconnects a subgraph consisting of two leaves adjacent to a single vertex, then this two-leaf subgraph is called a \emph{tree cherry}. The quartet tree has two tree cherries, and the single triangles have one tree cherry. If a blob is incident to exactly one non-trivial cut edge, it is called an \emph{exterior blob}. In Figure~\ref{fig:quarnets}, the network exhibits two exterior blobs, one the cycle associated with the leaves 1, 2, 3, and 4 and the other the cycle associated with the leaves 6 and 7. We refer to exterior blobs of cycle length three as \emph{reticulation cherries}. Both single triangles have one reticulation cherry, whereas both double triangles have two reticulation cherries. In this paper, the term \emph{cherry} refers to both tree and reticulation cherries. An \emph{exterior structure} is any cherry or exterior blob. A network without exterior structures is called a \emph{sunlet}, where an $n$-sunlet has $n$ leaves and one semi-directed cycle of length $n$. For instance, the square quarnet is a 4-sunlet.

\begin{definition}
\label{def:PendantPair}
Two leaves $x,y$ of a network $N$ form a \emph{pendant pair} if there exists an edge $E=\{u,v\}$ such that $x$ is adjacent to $u$ and $y$ is adjacent to $v$.
\end{definition}
While quartet trees and reticulation cherries exhibit pendant pairs, we will use this term primarily as a tool for ordering the leaves of a sunlet.

Having established the fundamental definitions and terminology for semi-directed phylogenetic networks, we now turn to developing algorithms for reconstructing such networks. These algorithms leverage the properties of quarnets and the structural features of level-1 networks to iteratively assemble parent networks from their subnetworks.

\section{Sequential Algorithm}
\label{sequential}
This section introduces the \emph{sequential algorithm} for constructing an $n$-leaf network from its complete quarnet set. The algorithm begins with a single quarnet and iteratively attaches one leaf at a time, determining the attachment location through a voting procedure. Notably, the resulting network is independent of the choice of initial quarnet and the ordering of the leaf additions.

\subsection{Leaf Attachments Determine Networks}
\label{leaf}
In the sequential algorithm, we seek to construct level-1 phylogenetic networks by iteratively adding leaves to the network. We define the following three methods to attach a leaf, and these attachments are depicted in Figure~\ref{fig:attachments}. By definition, these single leaf attachments do not produce networks that are level-2 or higher.  

\begin{figure}
 \centering
 \includegraphics[width=10cm]{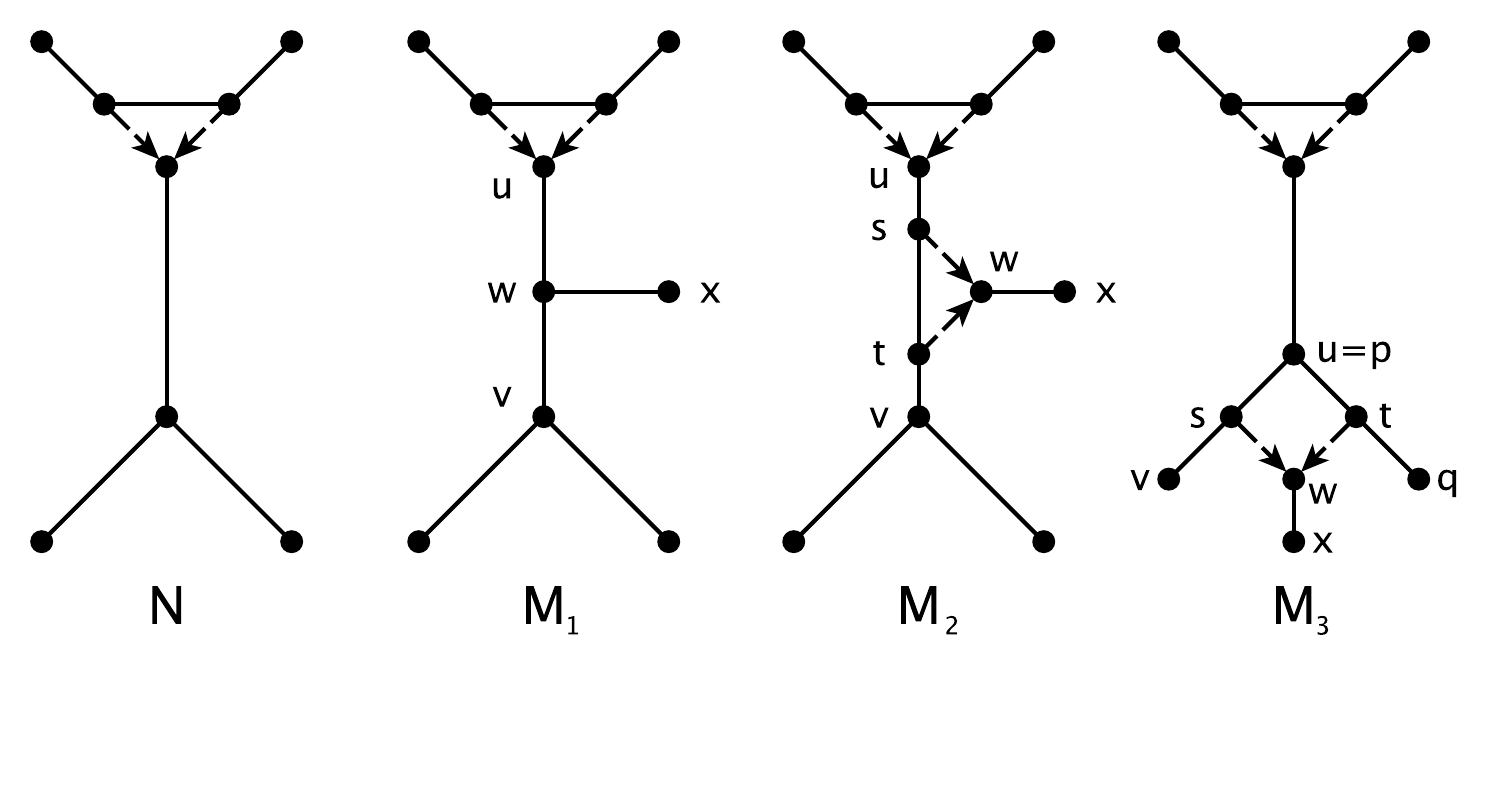}
 \caption{The three single leaf attachment functions ($M_1$, $M_2$, and $M_3$) applied to a network $N$.}
 \label{fig:attachments}
\end{figure}

\begin{definition}
\label{singleleaf}
Given a network $N$ and a leaf $x \not\in supp(N)$, we define the \emph{single leaf attachment functions} $M_1,M_2,$ and $M_3$ as follows:
\begin{enumerate}[\null]
\item \textbf{Insert a leaf edge $(M_1)$}: 
Given an edge $e = \{u,v\} \in N$, let $N'=M_1(N, x, e)$ be the network formed by subdividing $\{u,v\}$ into  $\{u,w\}$ and $\{w,v\}$ and introducing $x$ as leaf of $N'$ by constructing the additional edge $\{x,w\}$. The edges $\{u,w\}$, $\{w,v\}$, and $\{x,w\}$ are undirected in $N'$ unless $e$ was a reticulation edge on $N$.  Without loss of generality, if $e$  was directed from $u$ towards $v$, then $\{w, v\}$ is a reticulation edge toward $v$ in $N'$ and both $\{u,w\}$ and $\{w,x\}$ are undirected.

\item \textbf{Create a 3-cycle $(M_2)$}: Given an edge $e = \{u,v\}$ not on a cycle (and hence undirected), the network $N' = M_2(N, x, e, r)$ is formed by inserting vertices $s, t, w, x$, replacing $e$ with the edges $\{u,s\}, \{s,t\}, \{t,v\}$, and adding edges $\{s,w\}, \{t,w\}, \{w,x\}$ to create a 3-cycle. The reticulation vertex $r \in \{s,t,w\}$ determines the directed reticulation edges.
 
\item \textbf{Create a larger cycle $(M_3)$}: For any two edges $e_1 = \{u,v\}$ and $e_2= \{p,q\}$, let $P(e_1,e_2)$ be the  edges contained in any  path between $e_1$ and $e_2$. Note that $e_1$ and $e_2$ may share a single vertex. Let $\{e_1,e_2\}$  be a pair of edge such that $P(e_1,e_2)\cap E(C)=\emptyset$ for all cycles $C$ of $N$. The network $N' = M_3(N, x, e_1, e_2)$ is constructed by introducing vertices $s, t, w, x$, replacing $e_1$ with edges $\{u,s\}$ and $\{s,v\}$, and $e_2$ with $\{p,t\}$ and $\{t,q\}$. The new directed edges $\{s,w\}, \{t,w\}$ lead to the reticulation vertex $w$, with a new outgoing edge $\{w,x\}$. 
\end{enumerate}
\end{definition}

A network $N$ can be \emph{realized} by a sequence of leaf attachments if it can be built starting from a base quarnet and applying a series of leaf attachments.
We now show that every level-1 network can be constructed by a series of leaf attachments by an inductive argument. 

\begin{theorem}
\label{SProof}
Let $(x_1,x_2,\cdots, x_n)$ be an ordering of the leaves of a level-1 network $N$ with at least four leaves. Then $N$ can be realized by a sequence of single leaf attachments to the base quarnet $N|_{\{x_1,x_2,x_3,x_4\}}$. 
\end{theorem}

\begin{proof}
We proceed by induction on the number of leaves of a level-1 network $N$. Let $(x_1,x_2,\cdots, x_n)$ be an ordering of the leaves $N$. No attachments are needed for networks with four leaves since the network is a quarnet. Assume that for any sequence  $(x_1,x_2,\cdots, x_n)$ of its leaves, an $n$-leaf network $N$ can be realized as a sequence of single leaf attachments to $N|_{\{x_1,x_2,x_3,x_4\}}$. Now consider an arbitrary sequence $(x'_1,x'_2,\cdots, x'_n,x'_{n+1})$ of an $(n+1)$-leaf network $N'$. 

Then, by our inductive hypothesis, $N=N'|_{\{x'_1,x'_2,\cdots, x_n'\}}$ can be realized as a sequence of leaf attachments from the quarnet $q=N'|_{\{x'_1,x'_2,x'_3,x'_4\}}$.
Now consider the leaf $x=x'_{n+1}$. Let $w$ be the vertex adjacent to $x$ in $N'$. 

First, suppose $w$ is not on a 3-cycle and $x$ is not an out leaf. Let $u,v$ be the other two vertices adjacent to $w$ in addition to $x$ in $N'$. From $N'$, we can determine $N$ by removing the vertex $x$ and the incident edge $\{x,w\}$. Now, $w$ is a degree 2 vertex, so we suppress it. If either $\{u,w\}$ or $\{v,w\}$ was directed in $N'$, let $e= \{u,w\}$ have the same direction in $N$.  By the inductive hypothesis, $N$ can be realized by a sequence of leaf attachments given the ordering $(x'_1,x'_2,\cdots, x_n')$. The claim follows by noting that $N' = M_1(N, x, e)$. 

Next, suppose $x$ is on a 3-cycle. Let $s$ and $t$ be the other two vertices on the 3-cycle. Let $r\in \{s,t,w\}$ be the reticulation vertex on the cycle. Furthermore, let $u$ be the third vertex adjacent to $s$ and $v$ be the third vertex adjacent to $t$. To determine the $n$-leaf network $N$ from $N'$, remove $x$ and the edge $\{x,w\}$. Undirect the reticulation edges on the cycle between $s,t, w$ as removing $x$ collapses the cycle into a single undirected edge. To this end, suppress all degree two vertices and parallel edges until only the edge $e = \{u,v\}$ remains. By the inductive hypothesis, $N$ can be realized by a sequence of leaf attachments given the ordering $(x'_1,x'_2,\cdots, x_n')$. The claim follows by noting that $N' = M_2(N, x, e, r)$. 

Finally, if $w$ is a reticulation vertex on a $k$-cycle for an integer $k > 3$, then $x$ is an out leaf. Let $s$ and $t$ be the two vertices adjacent to $w$ on the $k$-cycle. Let $u$ and $v$ be the two other vertices adjacent to $s$ and $p$ and $q$ be the two other vertices adjacent to $t$. To determine $N$, remove vertices $x$ and $w$, the edge $\{x,w\}$, and the reticulation edges $\{s,w\}$ and $\{t,w\}$. Then, $s$ and $t$ are degree two vertices, so suppress them. Let $e_1 = \{u,v\}$ and $e_2 = \{p, q\}$. By the inductive hypothesis, $N$ can be realized by a sequence of leaf attachments given the ordering $(x'_1,x'_2,\cdots, x_n')$. The claim follows by noting that $N'=M_3(N,x, e_1, e_2)$.

\end{proof}

\subsection{Sequential Algorithm}
\label{algorithm1}

By Theorem \ref{SProof}, we can realize any network as a sequence of single leaf attachments to a base quartet. However, the construction required knowing the network in advance. We will show that the sequence of leaf attachments can also be extracted directly from the complete set of quarnets. The process involves selecting one quarnet as a base and sequentially attaching the remaining leaves to the network. To begin, we rigorously define when a leaf attachment is allowed and when it is optimal.

\begin{definition}
\label{voting}
Let $N$ be a network with leaf set $X$. Let $\mathcal{Q}$ be a set of quarnets such that $\mathcal{Q}(N) \subset \mathcal{Q}$ and let $x$ be a leaf such that $x \in \bigcup_{q\in \mathcal Q} supp(q) \setminus X$. A single leaf attachment $M(N,x)$ is \emph{allowed} on $N$ by $q \in \mathcal{Q}_x$ if $N'=M(N,x)$ displays $q$. A single leaf attachment $M(N,x)$ is an \emph{optimal attachment} if $M(N,x)$ is allowed by all $q \in \mathcal{Q}_x$.
\end{definition}

In order to reconstruct a network from its quarnets, we must be able to determine the optimal attachment of a new leaf at each step. The following theorem demonstrates that such an optimal attachment for any leaf exists and is unique.

\begin{theorem}
\label{thm:optimal}
Let $N'$ be a network with leaf set $X'$. Let $X \subset X'$ with $|X|\ge 4$, and let $N=N'|_X$. If $x \in X' \setminus X$, there is a unique optimal attachment of $x$ to $N$ with respect to $\mathcal{Q}(N')$.
\end{theorem}

\begin{proof}
Let $N'$ be a network with leaf set $X'$. Let $X \subset X'$ with $|X|\ge 4$, and let $N=N'|_X$. If $x \in X' \setminus X$. Let $N^*$ be the restriction of $N'$ to the set $X \cup \{x\}$. It follows that $\mathcal{Q}({N^*}) = \mathcal{Q}(N'|_{X \cup \{x\}})$. 

We choose an ordering of the leaves of $X \cup \{x\}$ such that $x$ is the final leaf. 
By Theorem~\ref{SProof}, $N^*$ can be realized by a sequence of single leaf attachments to a quarnet base according to this ordering. Thus, $N^*=M(N,x)$ for some single leaf attachment. Thus, $=M(N,x)$ is an optimal attachment. The uniqueness follows from $N^*$ being the unique restriction of $N'$ to the set $X \cup \{x\}$, and the fact that distinct single leaf attachments must result in non-isomorphic networks. 
\end{proof}

We now give our first algorithm for reconstructing the networks from quarnets:

\begin{algorithm}[h]
\label{Salg}
\DontPrintSemicolon
 
\KwInput{A complete set of quarnets $\mathcal{Q}(N)$ on $n$ leaves}
\KwOutput{A unique $n$-leaf network $Sequential(\mathcal{Q}(N)) = N$}

Assign an arbitrary ordering to the set of leaves $X$ on $\mathcal{Q}(N)$ such that $X = \{x_1, \dots, x_n\}$

Let $N_4$ be the quarnet with support $\{x_1, x_2, x_3, x_4\}$

\For{$i \gets 4$ \KwTo $n-1$}{
    Identify the unique optimal single leaf attachment, $M(N_i,x_{i+1})$, by checking if $\mathcal{Q}(M(N_i,x_{i+1}))=\mathcal{Q}(N|_{\{x_1, x_2,\cdots x_{i+1}\}})$ for at most $4|E(N_i)|+ \binom{|E(N_i)|}{2}$ single leaf attachments.
    
    Set $N_{i+1}= M(N_i,x_{i+1})$ 
}
\Return{$N_n$}
\caption{Sequential Algorithm}
\end{algorithm}

\begin{theorem}
Let $\mathcal{Q}(N)$ be the complete set of quarnets of a level-1 network $N$. Then, $Sequential(\mathcal{Q}(N)) = N$.
\end{theorem}

\begin{proof}
Let $f(\mathcal{Q}(N))$ denote the network output from the sequential algorithm with input set $\mathcal{Q}(N)$. If $N$ has four leaves, then $f(\mathcal{Q}(N))$ will return the base quarnet, which must be the network $N$. If $N$ is has more than $n$ taxa for $n>4$, then we have by definition $N_4 = N|_{\{x_1,x_2,x_3,x_4\}}$. For $4 \le i \le n-1$, the proof of Theorem~\ref{thm:optimal} ensures that $\mathcal{Q}(N_{i+1})=\mathcal Q(N|_{\{x_1,\cdots,x_{i+1}\}})$ hence $N_{i+1}=N|_{\{x_1,\cdots,x_{i+1}\}}$. In particular $f(\mathcal{Q}(N))=N_n=N$ as claimed.
\end{proof}

\begin{theorem}
The sequential algorithm with input set $\mathcal{Q}(N)$ for an $n$-leaf network has worst-case time complexity $O(n^6)$.
\end{theorem}

\begin{proof}
We analyze the cost of the algorithm as a function of the number of leaves $n$ of the network $N$ by considering each phase of its execution.

\textbf{Initial Setup.}
The algorithm begins by setting $\mathcal{Q} = \mathcal{Q}(N)$, which we assume is provided as part of the input. Accessing or copying this data takes $O(n^4)$ time.

\textbf{Sequential addition.}
At each iteration $i$ from $4$ to $n-1$, the algorithm builds a new network $N_{i+1}$ by attaching leaf $x_{i+1}$ to the current network $N_i$. For each such step:

\begin{itemize}
    \item The number of candidate single-leaf attachments is at most $O(i^2)$, since for a level-1 network on $i$ leaves, the number of edges $|E(N_i)| = O(i)$, and the algorithm considers up to $4|E(N_i)| + \binom{|E(N_i)|}{2} = O(i^2)$ candidate attachments.
    \item For each candidate network $M(N_i, x_{i+1})$, the algorithm computes the set of quarnets involving $x_{i+1}$. These quarnets are supported on subsets of four leaves that include $x_{i+1}$ and any three of the $i$ existing leaves, so their number is $\binom{i}{3} = O(i^3)$.
    \item Each quarnet comparison with the corresponding quarnet from $\mathcal{Q}(N|_{\{x_1, \dots, x_{i+1}\}})$ is assumed to take $O(1)$ time.
\end{itemize}

Thus, the total cost per iteration $i$ is:
\[
O(i^2) \text{ candidate attachments} \times O(i^3) \text{ quarnet checks} = O(i^5)= O(n^5)
\]

Since there are $n-4$ required additions the run time of the sequential addition stages is $ n \times O(n^5)=n^6$

\textbf{Total run time.}
The sequential algorithm runs in \[O(n^4) \text{ (setup) }  + O(n^6) \text{ (sequential addition) } = O(n^6).\]
\end{proof}

\subsection{Example of the Sequential Algorithm}
We now work through an example of using the sequential algorithm to construct an 8-leaf network. The complete set of quarnets for an 8-leaf network consists of 70 quarnets. Since not all quarnets are presented explicitly, we reference only the quarnets shown in Figure~\ref{fig:seqqnets}. Interestingly, reconstructing the network $N$ will only require these six quarnets, which we will refer to as $q_1, q_2,\cdots q_6$ throughout this example. 

\begin{figure}
 \centering
 \includegraphics[height=2.75cm]{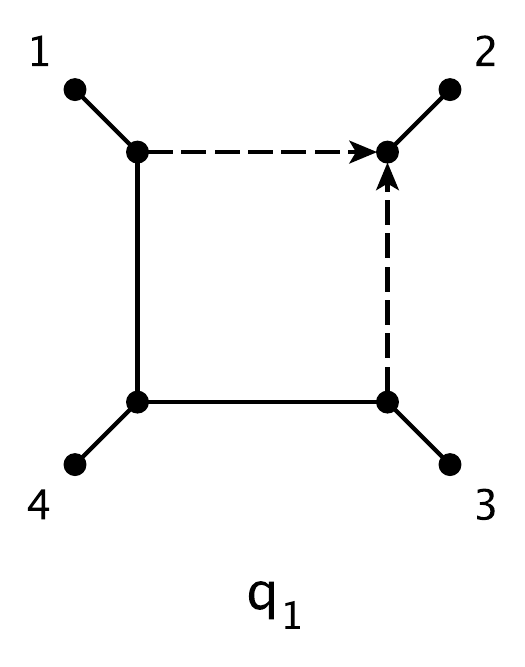}
 \includegraphics[height=2.75cm]{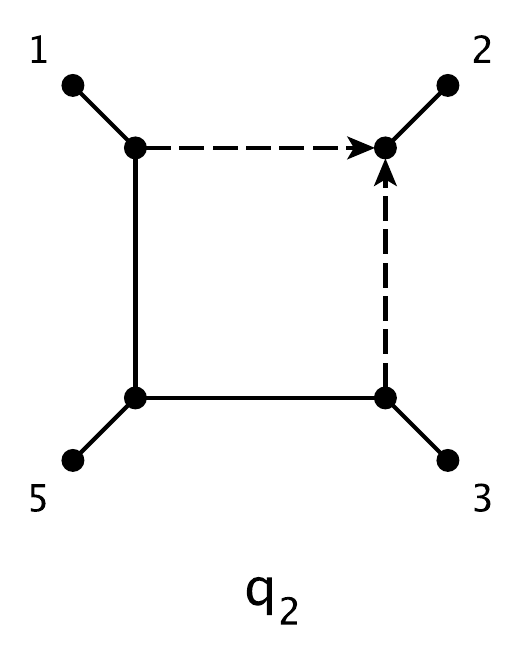}
 \includegraphics[height=2.75cm]{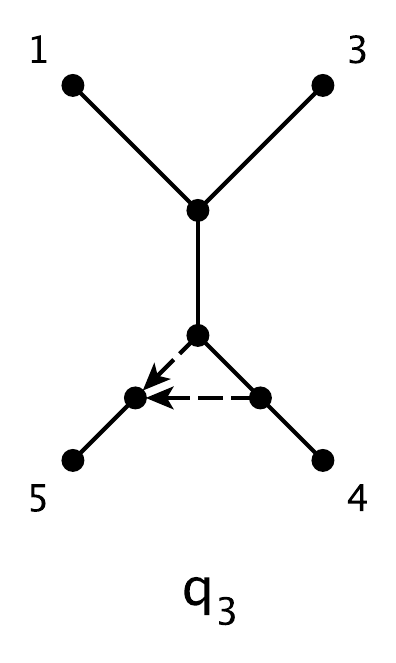}
 \includegraphics[height=2.75cm]{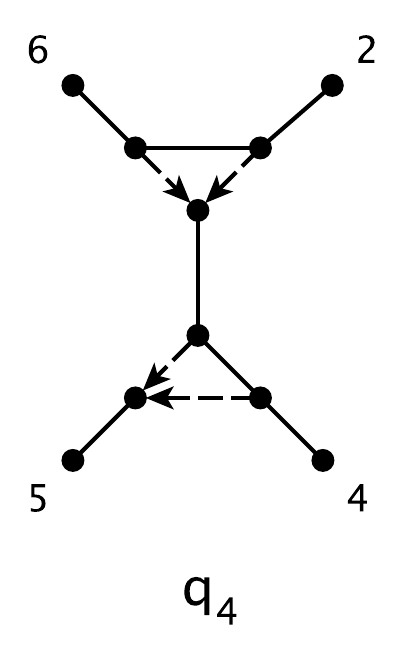}
 \includegraphics[height=2.75cm]{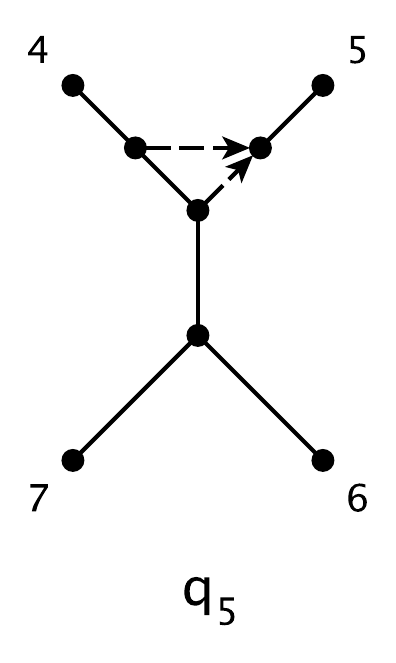}
 \includegraphics[height=2.75cm]{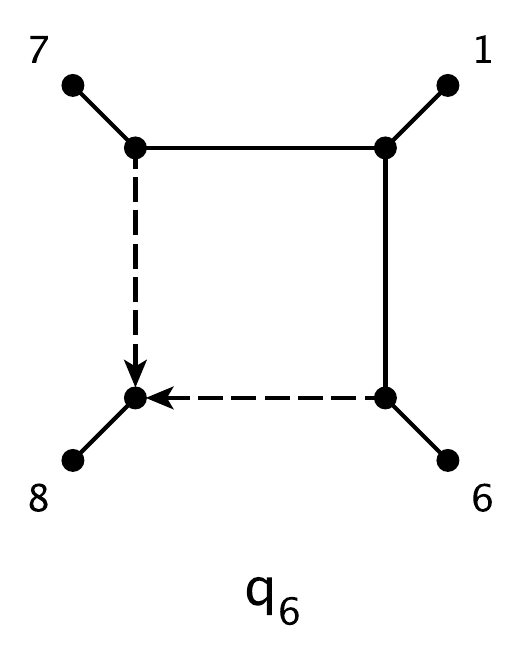}
 \caption{A subset of the complete quarnet set $\mathcal{Q}(N)$  for an unknown 8 leaf phylogenetic network $N$.}
 \label{fig:seqqnets}
\end{figure}

Let the leaf ordering be numerical (i.e., $1, 2, \dots, 8$). We use $q_1$ as the base quarnet or $N_4$. Any single leaf attachment to $N_4$ of type $M_3$ involves placing a leaf between a pair of edges, which would increase the level. Thus, only attachments $M_1$ and $M_2$ need to be considered when attaching the fifth leaf.

We must identify the unique optimal leaf attachment for leaf 5. In order to do so, we consider $\mathcal Q_5$, the subset of all quarnets containing the leaf 5. 
We begin with the quarnet $q_2$ with $supp(q_2) = \{1, 2, 3, 5\}$, the second quarnet in Figure~\ref{fig:seqqnets}. First, we look at all possible $M_1$ attachments. Based on the order of vertices defined by their adjacency, the placement of leaf 5 on edges $f, g$, or $h$ in Figure~\ref{fig:n4} is allowed. As for $M_2$, the only allowed attachment is on edge $g$, though any orientation is possible. 

\begin{figure}
 \centering
 \includegraphics[height=3cm]{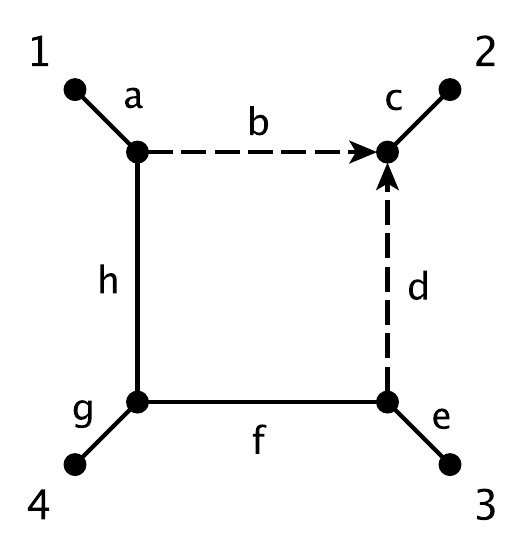} \hspace{1cm} \includegraphics[height=3cm]{InvolveSubmission/Images/SEq2.pdf} \includegraphics[height=3cm]{InvolveSubmission/Images/SEq1.pdf}
 \caption{Base quarnet $N_4$ and two quarnets in $\mathcal{Q}_5(N)$
 for which there is a unique allowable single leaf attachment of leaf $5$.}
  \label{fig:n4}
\end{figure}

Since $q_2$ allows multiple attachments, we examine another quarnet $q_3 \in \mathcal Q_5$. This quarnet disallows $M_1$ on any edge of $N_4$ because performing an $M_1$ attachment would form a square quarnet on the vertices $\{1, 3, 4, 5\}$ which is incompatible with the actual topology of these vertices according to $q_3$. Therefore, only $M_2(N_4, 5, g, w)$ is allowed. Since this is the only allowed attachment, it must be optimal. Thus,  $N_5=M_2(N_4, 5, g, w)$. 

\begin{figure}
 \centering
 \includegraphics[height=3.5cm]{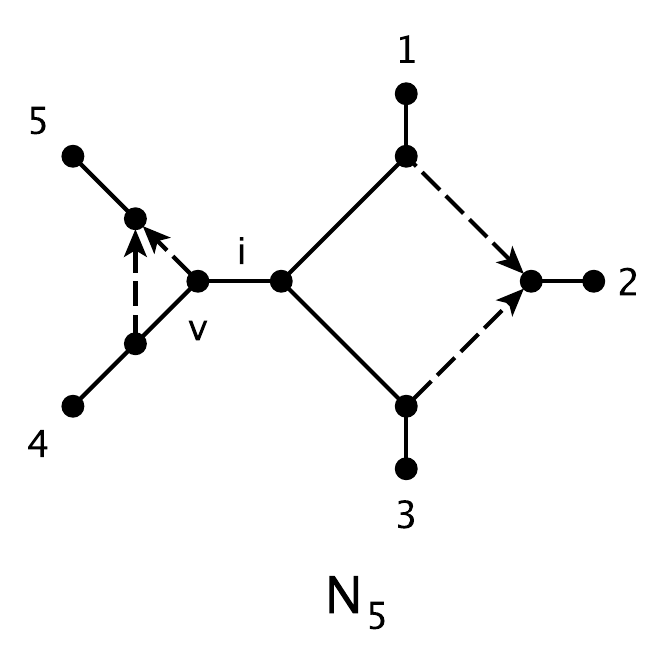} \hfill
 \includegraphics[height=3.5cm]{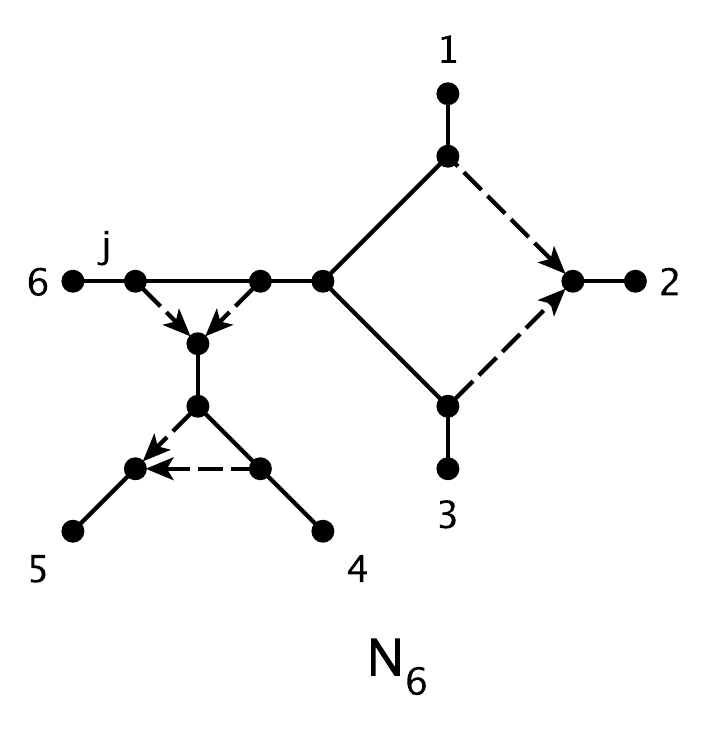} \hfill
 \includegraphics[height=3.5cm]{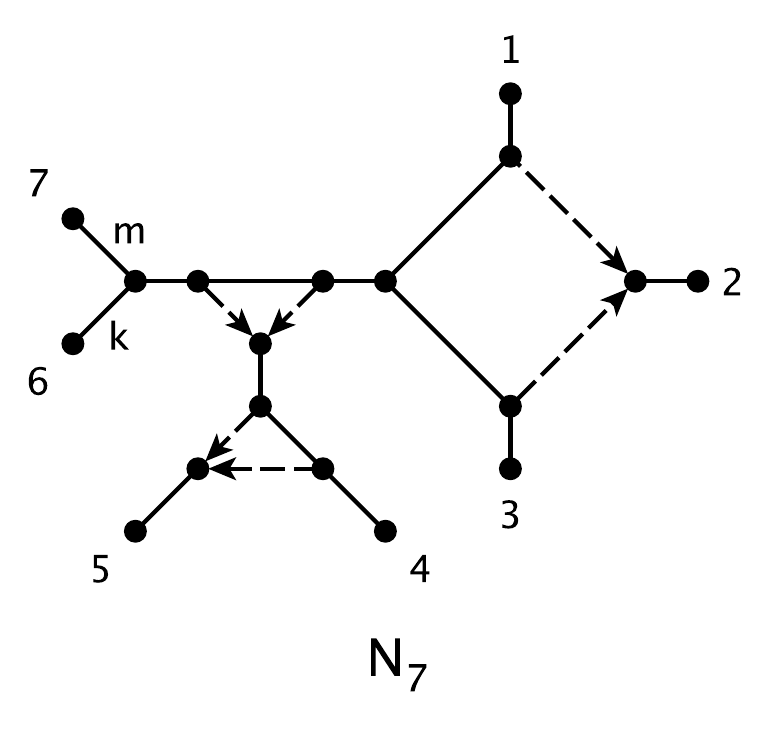}
 \includegraphics[height=3.5cm]{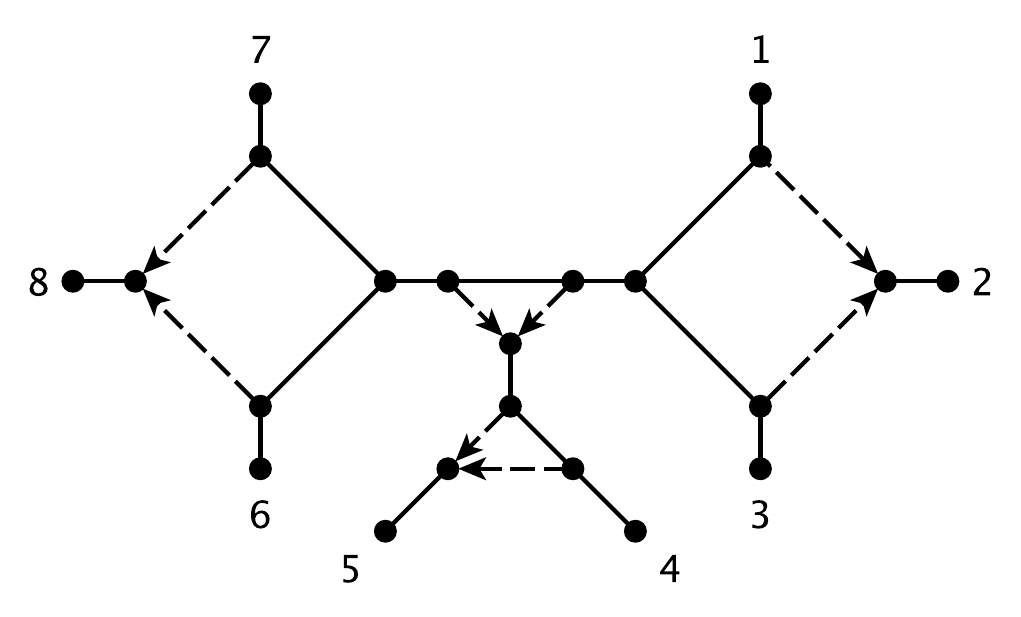}
 \caption{Intermediate networks $N_5$, $N_6$, and $N_7$ determined by the quarnets in Figure~\ref{fig:seqqnets} and the fully reconstructed network $N_8$.}
 \label{fig:intermediates}
\end{figure}

For the remaining leaves, we select quarnets that allow only one attachment, as shown in Figure~\ref{fig:intermediates}. To determine $N_6$, note that $q_4$ allows only $M_2(N_5, 6, i, v)$ because leaf 6 cannot be placed on the same cycle as leaves 2, 4, or 5. To determine $N_7$, we note $q_5$ allows only $M_1(N_6, 7, j)$, as no new cycle is needed. Finally for $N_8$, we observe $q_6$ allows only $M_2(N_7, 8, k, w)$.

Thus, by applying the sequential algorithm to the quarnets shown, we construct an 8-leaf level-1 network.

\section{Cherry-Blob Algorithm}
\label{Cherry-Blob}

\subsection{Cherry-Blob Algorithm Overview}
\label{cbOverview}
Phylogenetic trees can be constructed using cherry-picking methods to identify tree cherries from quartets and recursively build trees \cite{cherrypicking2}. Similarly, we show that networks can be constructed by identifying exterior structures from quarnets. We propose the \emph{cherry-blob algorithm} for reconstructing networks from the complete set of quarnets.

\subsection{Properties of Exterior Structures}
\label{propAS}
 An exterior structure is a locally connected component of a network that contains only trivial cut edges. If we call the support of this component $S$, then the exterior structure $A_S$ is a tree cherry, a reticulation cherry, or an exterior blob (Figure ~\ref{fig:exteriorstructures}). If $A_S$ is a tree cherry, then both leaves in $S$ are adjacent to a common vertex. If $A_S$ is a reticulation cherry or an exterior blob, then all leaves in $S$ are adjacent to vertices on a common semi-directed cycle.

\begin{figure}
 \centering
 \includegraphics[width = 250pt]{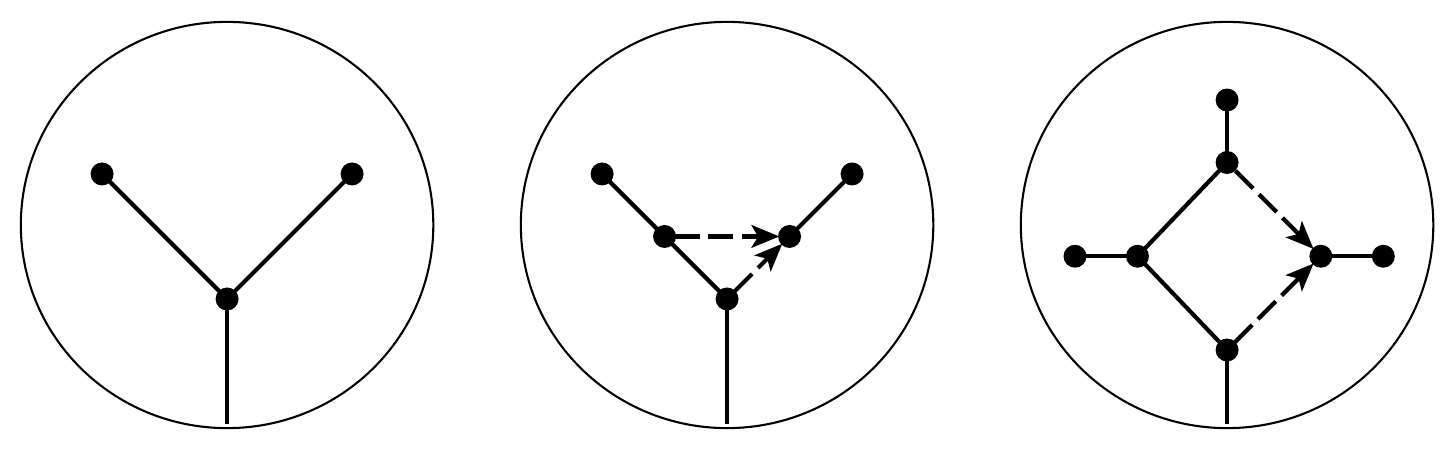}
 \caption{The three types of exterior structures found on networks: (left to right) a tree cherry, a reticulation cherry, and an exterior blob.}
 \label{fig:exteriorstructures}
\end{figure}

The \emph{stem} of an exterior structure $A_S$ is the non-trivial cut edge $\{u,v\}$ which connects $A_S$ to a larger network. Call $v$ the \emph{stem tip} if $v$ is adjacent to the leaves of the cherry or if $v$ is a vertex on the semi-directed cycle in the case of a reticulation cherry or an exterior blob. The other vertex, $u$, is the \emph{stem base}. The subgraph $A_S$ contains the stem tip and every path between the stem tip and a leaf in $S$. The stem tip is the only degree 2 vertex on $A_S$. The stems of these exterior structures are the markers for cutting and gluing networks together.

\subsection{Manipulations of Exterior Structures}
A network $N$ can be modified by cutting or inserting exterior structures by modifying stems, which preserves the level-1 structure of the network. The two operations are denoted by $C$ for cutting an exterior structure and $I$ for inserting an exterior structure. 

\begin{definition}
\label{Cut}
Let $A_S$ be an exterior structure of a network $N$ with stem base $u$ and stem tip $v$. Then $C(N,A_S)$ is the network constructed by removing $A_S$ from $N$ and inserting a placeholder leaf $a_S$ along a new edge $\{u,a_S\}$. This is the \textit{cut procedure}.
\end{definition}

\begin{definition}
\label{Insert}
Let $A_S$ be a network that contains only trivial cut edges and has a single degree 2 vertex $v$. Let $N$ be a network such that $supp(A_S)\cap supp(N) = \emptyset$, and let $l$ be a leaf on $N$ that is adjacent to internal vertex $u$. Then $I(N, l, A_S)$ is the network that is constructed by removing $l$, adding $A_S$, and replacing the edge $\{u,l\}$ with the edge $\{u,v\}$. This network now contains $A_S$ as an exterior structure with stem $\{u,v\}$. This is the \textit{insertion procedure}.
\end{definition}

We can think of the cut and insertion procedures as inverses of each other, removing or inserting exterior structures, respectively, where $N = I(C(N, A_S), a_S, A_S) $. We can extend the insertion procedure to recursively build up a network given a sequence.  

\begin{definition}
\label{insertseq}
Let $Z$ be any network with at least four leaves and let \\$\{(A_{S_1},a_{S_1}), \dots, (A_{S_k}, a_{S_k})\}$ be a partially ordered set of tuples such that each $A_{S_i}$ is a distinct network that contains only trivial cut edges and has a single degree 2 vertex $v_i$, and each $a_{S_i}$ is a leaf that appears either on $Z$ or on any $A_{S_j}$ where $j<i$. Furthermore, $supp(Z)\cap supp(A_{S_i}) = supp(A_{S_i}) \cap supp(A_{S_j}) = \emptyset \quad  \forall \: i \neq j$. Then, $I(Z \mid (A_{S_1},a_{S_1}), \dots, (A_{S_k}, a_{S_k}))$ is an \emph{insertion sequence} defined by recursively inserting the exterior structures. We let $N_0 = Z$, and $N_i = I(N_{i-1}, a_{S_i}, A_{S_i})$ for $1 \leq i \leq k$. The resulting network $N_k$ is $I(Z \mid (A_{S_1},a_{S_1}), \dots, (A_{S_k}, a_{S_k}))$.
\end{definition}

Theorem~\ref{thm:CProof} demonstrates that every network can be represented by an insertion sequence as shown in Figure~\ref{fig:cutting}.

\begin{figure}
 \centering
 \includegraphics[width=12cm]{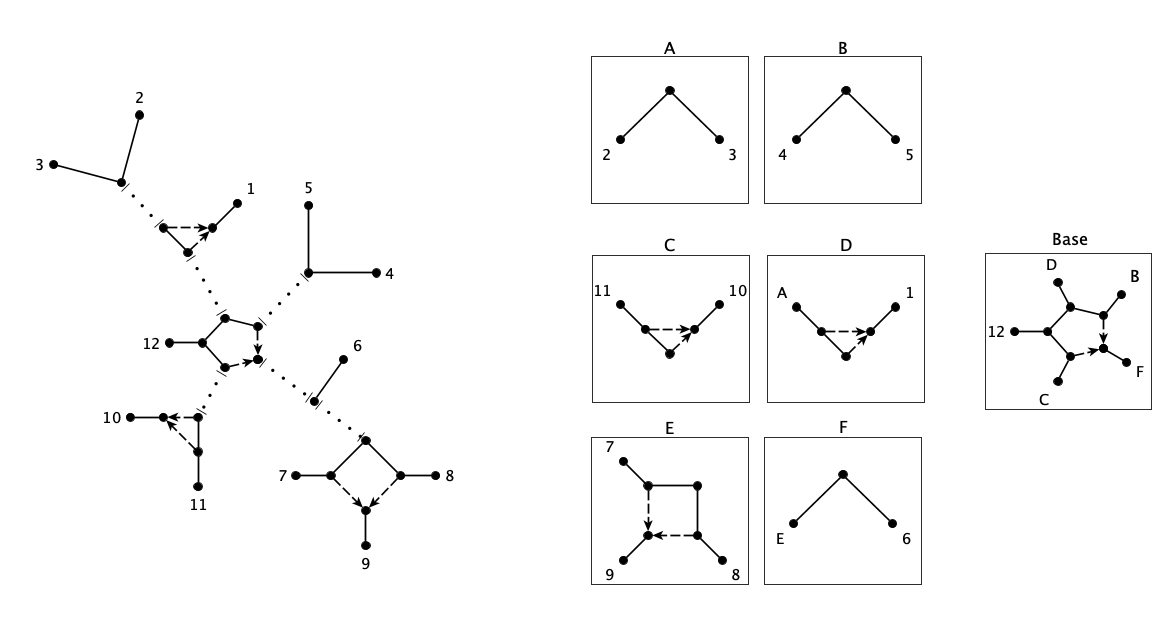}
 \caption{Left: A network with all non-trivial cut edges identified as dotted edges. Right: An ordered set of exterior structures and a base that determines the network on the left.}
 \label{fig:cutting}
\end{figure}

\begin{theorem}
\label{thm:CProof}
Every network $N$ with four or more leaves can be represented by an insertion sequence with a quarnet or a sunlet base.
\end{theorem}

\begin{proof}
If $N$ only has four leaves, then it is quarnet. So we assume $N$ has at least five leaves and induct on the number of non-trivial cut edges in a network. If there are no non-trivial cut edges in $N$, then by definition, the network is a sunlet. Now, assume that every network with at least four leaves and $k$ non-trivial cut edges can be represented by an insertion sequence with a quarnet or sunlet base. Given a network $N$ with $k+1$ non-trivial cut edges, recognize that $N$ has at least five leaves, so there must be an exterior structure $A_S$ such that $X \setminus S$ contains at least three leaves. Define a network $N'$ by replacing $A_S$ with a leaf $l$. Since $N'$ has at least four leaves and $k$ non-trivial cut edges, $N'$ can be represented by an insertion sequence with a quarnet or sunlet base. The claim follows by observing that $N = I(N', l, A_S)$.
\end{proof}

Although we have assumed knowledge of $N$ so far, we need to determine exterior structures from quarnets in practice. The following sections explain how any exterior structure can be identified from quarnets. We begin with the identifiability of the exterior blob structure. 

\subsection{Recovering The Topology of Exterior Blobs}

We rely on properties of sunlets to guide the identification of exterior blob topology. Recall that an $n$-sunlet consists of $n$ leaves and one semi-directed cycle of length $n$. The symmetry of a sunlet simplifies the analysis of its displayed quarnets (see Figure~\ref{fig:SunletRestriction}).

\begin{figure}
 \centering
 \includegraphics[width=8cm]{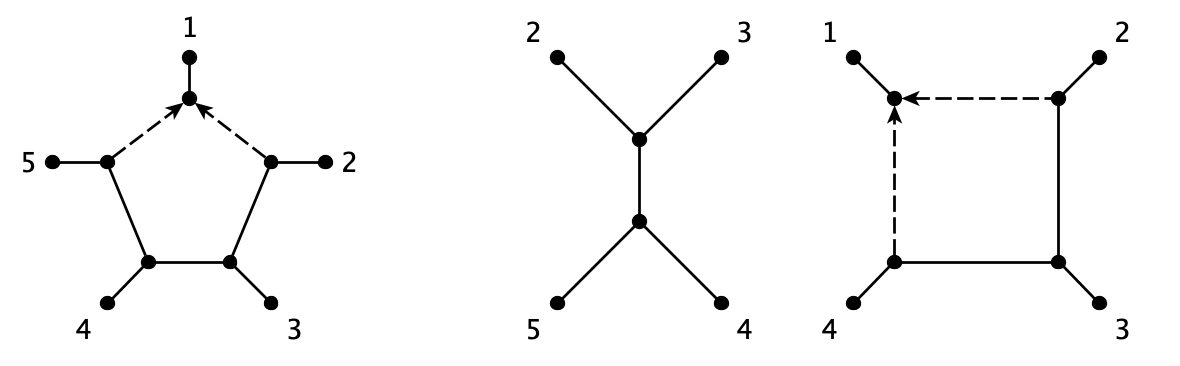}
 \caption{Left: A 5-sunlet network $N$. Right: A quartet tree and a square quarnet displayed by $N$.}
 \label{fig:SunletRestriction}
\end{figure}

\begin{lemma}
\label{lem:outleafOrcherry}
Every network with at least four leaves contains an out leaf or a tree cherry.
\end{lemma}

\begin{proof}
Let $N$ be a network with at least four leaves. Root $N$ at a valid location $\rho$ and direct all edges away from the root. Internal vertices either have in-degree one and out-degree two or, in the case of reticulation vertices, in-degree two and out-degree one. 

Choose a leaf $l$ that maximizes the path length from $\rho$. Since there are at least four leaves, $l$ must be adjacent to a non-root vertex $v$. If $v$ is a reticulation vertex, $l$ is an out leaf due to the orientation of the edges. Otherwise, $v$ must have out-degree two, implying that $v$ is part of a tree cherry.
\end{proof}

\begin{theorem}
\label{thm:sunqnets}
A network $N$ is an $n$-sunlet with $n \geq 4$ if and only if $\mathcal{Q}(N)$ consists of $\binom{n-1}{3}$ squares and $\binom{n-1}{4}$ quartet trees.
\end{theorem}

\begin{proof}
Let $N$ be an $n$-sunlet. By definition, it has $n$ leaves and an $n$-length cycle with exactly one reticulation event. According to Lemma~\ref{lem:outleafOrcherry}, one of the leaves must be an out leaf since the sunlet does not have any tree cherries. Consequently, any quarnet containing this out leaf is a square, while any quarnet not containing an out leaf forms a quartet tree (see Figure~\ref{fig:SunletRestriction}). Thus, there are $\binom{n-1}{3}$ quarnets with the out leaf (squares) and $\binom{n-1}{4}$ quarnets without the out leaf (quartet trees). 

Now, assume that $\mathcal{Q}(N)$ consists of $\binom{n-1}{3}$ squares and $\binom{n-1}{4}$ quartet trees for some network $N$. Since at least one reticulation vertex is present in the quarnets, $N$ must contain at least one reticulation vertex, implying that it is not a tree. Suppose for contradiction that $N$ is not a sunlet; then, $N$ must either contain a tree cherry or another cycle. 

If $N$ has a tree cherry, it follows from the restriction process that at least one single triangle contains both the reticulation vertex and the tree cherry in the complete quarnet set. Conversely, if $N$ has another cycle, the restriction process will yield at least one double triangle containing two reticulation vertices. In either scenario, $\mathcal{Q}(N)$ includes quarnets that are neither square nor quartet trees, leading to a contradiction. Therefore, $N$ must be a sunlet.
\end{proof}

Thus, we can identify whether a network is a sunlet based on the count of squares and trees in $\mathcal{Q}(N)$. Moreover, we can determine the reticulation's location and the leaves' ordering along the sunlet based on $\mathcal{Q}(N)$. Finding this ordering is similar to the sequential algorithm: we start with a single square quarnet and add one leaf at a time based on information from other quarnets. This process is detailed in Algorithm~\ref{alg:SunConstruct}.

\begin{algorithm}[H]
\caption{Sunlet Construction}
\label{alg:SunConstruct}
\DontPrintSemicolon
\KwInput{Let $\mathcal{Q}_{sq(N)} \subset \mathcal{Q}(N)$ be the set of square reticulation quarnets in the complete quarnet set of a sunlet network $N$.}
\KwOutput{The placement of the reticulation vertex and the ordering of the leaves of $N$, denoted as $Sunlet(\mathcal{Q}_{sq(N)}))$.}

Pick $q \in \mathcal{Q}_{sq(N)}$. Set $B_4 = q$. 

\For{$i \gets 4$ \KwTo $n-1$}{
 Pick a quarnet $q_{i+1} \in  \mathcal{Q}_{sq(N)}$ such that $|supp(q_{i+1}) \cap supp(B_i)| = 3$, and two of the common leaves, $x_a$ and $x_b$, form a pendant pair in  $B_i$ but not in $q_{i+1}$.

 Add a new vertex $u$.
 
 Replace the edge $e$ with the pair of edges $\{v,u\}$ and $\{u,w\}$.
 
 Let $x_{i+1}$ be the element in $supp(q_{i+1}) \setminus supp(B_i)$.
 
 Add an edge $\{u,x_{i+1}\}$. 

 Call this new network $B_{i+1}$.
}
\Return{$B_n$}
\end{algorithm}

\begin{theorem}
\label{thm:sunlet}
Suppose $N$ is a sunlet network. Let $ \mathcal{Q}_{sq(N)} \subset \mathcal{Q}(N) $ be the set of square quarnets in the complete quarnet set of $ N $. Then $Sunlet(\mathcal{Q}_{sq(N)}) = N $.
\end{theorem}

\begin{proof}
Let $ \mathcal{Q}_{sq(N)} \subset \mathcal{Q}(N) $ be the set of square quarnets in the complete quarnet set of a sunlet network $ N $. We will prove that $ Sunlet(\mathcal{Q}_{sq(N)}) = N $ by induction on the number of leaves $ n $ of the sunlet.

For the base case, when $ n = 4 $, we have $ \mathcal{Q}_{sq(N)} = N $, so $ Sunlet(\mathcal{Q}_{sq(N)}) $ returns $ N $ in the initial step. Now assume $Sunlet(\mathcal{Q}_{sq(N)}) = N $ for all sunlets with $ n $ or fewer leaves, and consider $ N $ as an $ (n+1) $-leaf sunlet.

Notice that Algorithm~\ref{alg:SunConstruct} induces an ordering of the leaves of $ N $, denoted as $ \{x_1, x_2, \ldots, x_{n+1}\} $, where $ \{x_1, x_2, x_3, x_4\} $ are the support of the initial quarnet, and $ x_i = supp(q_i) \setminus supp(B_i) $ for $ 5 \leq i \leq n+1 $. Therefore, the steps for $ i = 1 $ to $ i = n $ of the Sunlet Construction Algorithm are equivalent to running the Sunlet Construction Algorithm on the restriction of $ N $ to the first $ n $ leaves.

By the inductive hypothesis, $ B_n $ is the restriction of $ N $ to the first $ n $ leaves. Since $ \mathcal{Q}_{sq(N)} $ consists of the square quarnets of a sunlet network, there exists a unique choice of square reticulation quarnet $ q_{n+1} $ such that $ |supp(q_{n+1}) \cap supp(B_n)| = 3 $, where two of the common leaves, $ x_a $ and $ x_b $, are pendant in $ B_n $ and not pendant in $ q_{n+1} $. Given this, the remaining portion of the sunlet construction places $ x_{n+1} $ in its proper location within the sunlet. Thus, we conclude that $ Sunlet(\mathcal{Q}_{sq(N)}) = N $.
\end{proof}

We can extend the utility of Algorithm ~\ref{alg:SunConstruct} by recognizing that for any exterior blob with support $S$, where $|S|\ge 3$, the restriction to a quarnet containing any 3 leaves in $S$ and any 1 leaf not in $S$ will be a sunlet.

\begin{corollary}
\label{cor:extblob}
Let $A_S$ be an exterior blob of a level-1 network $N$. If $S$ is the support of the exterior blob of $A$, and $y$ is any other leaf of $N$, then \begin{equation*} Sunlet(\mathcal{Q}(N|_{S \cup \{y\}})) = N|_{S \cup \{y\}}. \end{equation*}
\end{corollary}

\begin{proof}
Let $A$ be an exterior blob of a level-1 network $N$. If $S$ is the set of leaves of the exterior blob of $A$, and $y$ is any other leaf of $N$, then the restriction $ N|_{S \cup \{y\}} $ is a sunlet. Thus, by Theorem~\ref{thm:sunlet}, $ Sunlet(\mathcal{Q}_{sq(N|_{S \cup \{y\}})}) = A $.
\end{proof}

Thus, the exterior blobs of the network $N$ can be reconstructed from the subset of $ \mathcal{Q}(N) $ whose support is contained in $ S \cup \{y\} $ using the Sunlet Construction Algorithm. 

\subsection{Efficient Identification of Exterior Structure Leaf Sets}
The application of the Sunlet Construction Alogithm requires knowing in advance which subsets of leaves form the exterior blobs or checking all $ 2^n $ possible subsets of leaves. We, therefore, need a strategy for determining which subsets of the leaves are contained in an exterior structure. Once this has been identified, one can reconstruct the exterior structure using Corollary~\ref{cor:extblob}. Our goal is to identify a collection of subsets of leaves that could support exterior blobs. To achieve this, we introduce the incompatibility graph and two lemmas that clarify the use of the incompatibility graph in the identification of exterior blobs.

\begin{definition}
\label{Incompatgraph}
For a network $N$ with leaf set $X$, let $H(\mathcal{Q}(N))$ be the \emph{incompatibility graph} with vertex set  $V = X $. Define an edge between leaves $a$ and $b$ if there exists a single or double triangle $q \in \mathcal{Q}(N)$ such that $ \{a,b\} \subset supp(q)$ and there is a non-trivial cut edge on the path between $ a$ and $b$ on $q$ (see Figure~\ref{fig:IncomSingleQ}).
\end{definition}

\begin{figure}[ht]
 \centering
 \includegraphics[width=6cm]{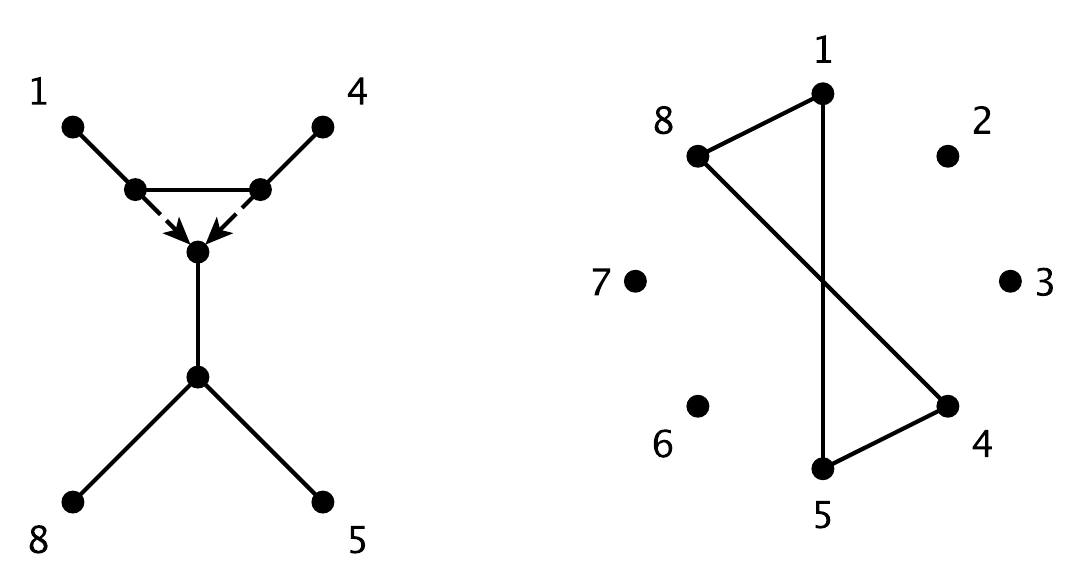}
 \caption{Left: A single triangle $q \in \mathcal{Q}(N)$. Right: the four edges of the incompatibility graph   on the leaves of $N$ induced by $q$.}
 \label{fig:IncomSingleQ}
\end{figure}

\begin{lemma}
\label{lemma:extBlob}
Let $N$ be a level-1 network with at least four leaves and $S$ be the support of an exterior blob $A_S$. Then for any $x \in S$, $\{x,y\}$ is an edge in $H(\mathcal{Q}(N))$ if and only if $y\notin S$.
\end{lemma}

\begin{proof}
Let $N$ be a level-1 network, and let $A_S$ be an exterior blob of $N$ with support $S$. 

Let $x,y$ be any two distinct leaves in the support of $A_S$. Suppose $ q $ is a quarnet whose support contains $x,y \in S$. If at least one of the other elements of $supp(q)$ is also contained in $S$, then $ q $ can be viewed as a quarnet on the restriction of $ N $ to the sunlet containing $ A_S $ and the additional taxa in the support of $ q $. Thus, by Theorem~\ref{thm:sunqnets}, $ q $ must either be a tree or a square, which would not generate an edge between $x$ and $y$ in $ H(\mathcal{Q}(N)) $. 
If, instead, the support of $ q $ contains two elements outside of $ S $, then the quarnet must be a quartet or a single or triangle with $x$ and $y$ on the same side of the cut edge that defines the exterior blob. Thus, no edge is drawn between $x$ and $y$. There are no edges connecting pairs of vertices in $ S $.

Conversely, let $x \in S$ and $y \notin S$. If the stem tip of $A_S$ is the reticulation vertex, then any quarnet with exactly 2 elements in $S$ will have a reticulation cherry between those two elements, so choose $a \in S$ arbitrarily. If instead the reticulation vertex is not the stem tip of $A_S$, then $A_S$ must have an out leaf. Any quarnet containing the out leaf and exactly one other element of $S$ will have a reticulation cherry between those two elements, so choose $a \in S$ such that either $a$ or $x$ is the out leaf. Then for any $b \notin S$, the quarnet $q$ with $x,y,a,b \in supp(q)$ will be a single or double triangle with $x$ and $a$ on a reticulation cherry and $y$ on the other side of the cut edge. Thus $\{x,y\}$ is an edge in $ H(\mathcal{Q}(N)) $.

\end{proof}

A proper graph coloring is a partitioning of vertices into color classes such that no two vertices in a single color class are adjacent \cite{GraphColoring}. By construction, the incompatibility graph created from the full quarnet set of a network does not contain edges between the vertices in the support of a single exterior blob. We show that it is possible to find a coloring that places every vertex in the support of an exterior blob into the same color class. 

\begin{theorem}
\label{thm:extBlob}
Let $N$ be a level-1 network and $A = \{x_1, x_2, \ldots, x_k\}$ be the support of an exterior blob of $N$. Then $A$ must be a color class of the vertex coloring of the incompatibility graph $H(\mathcal{Q}(N))$ under the greedy coloring algorithm.
\end{theorem}

\begin{proof}
Let $N $ be a level-1 network and $ A = \{x_1, x_2, \ldots, x_k\} $ be the support of an exterior blob of $ N $.  Let $(a_1,a_2,a_3,\cdots, a_n)$ be any ordering of the vertices of $N$. We can re-index the elements of $A$ such that $x_1$ through $x_n$ form an increasing subsequence of the ordering.

Then using the greedy coloring algorithm $x_1$ would be placed in its own color class, as by Lemma~\ref{lemma:extBlob}, $x_1$ would be adjacent to all of the preceding vertices in $H(\mathcal{Q}(N))$. Call this color class $C$. Since $x_1$ is adjacent to all elements outside of $A$ no element which is not in the support of the exterior blob can be placed in $C$. Now note that all additional elements in $A$ should be placed in this color class, as they would be adjacent to any of the elements not in $A$ (and thus not be able to be placed in any existing coloring class except for $C$) but would not be adjacent to any element in the color class $C$ (by Lemma \ref{lemma:extBlob}). So it follows that $C=A$.
\end{proof}

To determine if the leaves in a color class $ X' $ form an exterior blob, one can pick a leaf $ y $ that is not in the color class and check if the quarnets that have support in $ X' \cup \{y\} $ are associated with a sunlet network. Thus, the graph $ H(\mathcal{Q}(N)) $ allows us to check a maximum of $ n $ sets for exterior structures rather than the potential $ 2^n $ subsets.

Unlike exterior blobs, cherries can be identified simply by counting quarnets.

\begin{theorem}
\label{idCherries}
A network $ N $ has a cherry containing leaves $ a $ and $ b $ if and only if $ \mathcal{Q}(N) $ contains $ \binom{n-2}{2} $ quarnets with a cherry containing $ a $ and $ b $.
\end{theorem}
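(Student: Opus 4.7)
The plan is to prove the two implications separately. For the forward direction I exploit that ${n-2 \choose 2}$ is exactly the number of 4-element subsets of $X$ containing both $a$ and $b$, so it suffices to show that each such restriction inherits a cherry on $\{a,b\}$ from $N$. For the backward direction the count forces equality, making every such restriction exhibit an $\{a,b\}$-cherry; I leverage this structural uniformity by contrapositive to deduce that $N$ itself contains such a cherry.

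\emph{Forward direction.} Assume $\{a,b\}$ is a cherry in $N$, and let $A$ denote the corresponding exterior substructure: the common parent vertex together with its two leaf edges in the tree-cherry case, or the exterior 3-cycle carrying $a$ and $b$ in the reticulation-cherry case. Fix any $c,d \in X \setminus \{a,b\}$ and trace Algorithm~\ref{restrictions} on $Y=\{a,b,c,d\}$. After rooting at a valid location and orienting edges, every edge of $A$ lies on some directed path from $\rho$ to $a$, $b$, or through the stem of $A$ toward $\{c,d\}$; in the reticulation-cherry case this uses that paths from $\rho$ into the 3-cycle can traverse either arc in reaching the reticulation's descendants, so all three cycle edges appear in the union of paths. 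Consequently $A$ survives step 4 intact and, since every vertex of $A$ retains degree 3, is untouched in step 5. Hence $\{a,b\}$ remains a cherry in $N_{|\{a,b,c,d\}}$, and summing over the ${n-2 \choose 2}$ choices of $\{c,d\}$ yields the claimed count.

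\emph{Backward direction.} The hypothesized count equals the total number of 4-leaf subsets of $X$ containing $a,b$, so every such subset gives a quarnet with $\{a,b\}$ a cherry; I argue the contrapositive. If some non-trivial cut edge $e$ of $N$ separates $a$ from $b$, I pick $c \neq a$ on $a$'s side and $d \neq b$ on $b$'s side (each exists because $e$ is non-trivial), and note that $e$ survives in $N_{|\{a,b,c,d\}}$ as a cut edge still separating $a$ from $b$, precluding any $\{a,b\}$-cherry. Otherwise, no non-trivial cut edge separates them and all interior edges of the $a$-to-$b$ path lie on cycles. Since $N$ is binary and level-1, each interior vertex on the path is a cycle vertex with two on-cycle edges and one off-cycle edge, so consecutive on-path cycle edges share a cycle; iterating forces $a$ and $b$ both to be incident to a common cycle $C$. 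A quick case check rules out $|C|=3$ when $\{a,b\}$ is not a reticulation cherry, so $|C|\ge 4$; I then pick $c$ to be a leaf reachable through a third vertex of $C$ and $d$ any remaining leaf.

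\emph{Main obstacle.} The hardest step is finishing this last case: one must confirm that $\{a,b,c\}$ remain cocyclic in the restricted quarnet, forcing the resulting topology (among the six in Figure~\ref{fig:quarnets}) to be a square or a double-reticulation quarnet in which no exterior cherry substructure contains $\{a,b\}$. The key enabling fact is that the level-1 restriction procedure neither merges distinct biconnected components nor collapses a cycle carrying three retained leaves into a tree, so the cocyclicity of $\{a,b,c\}$ in $N$ propagates to the quarnet and obstructs any $\{a,b\}$-cherry there.
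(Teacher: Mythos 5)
Your forward direction is sound and is essentially the paper's (the paper simply asserts that the restriction process preserves a cherry on $\{a,b\}$, which you expand on). The backward direction, however, has a genuine gap in Case 2, located exactly in the claim you flag as the key enabling fact: it is \emph{not} true that the restriction procedure never collapses a cycle carrying three retained leaves into a tree. Algorithm 1 keeps only the union of directed paths from $\rho$ to the retained leaves, so a semi-directed cycle survives in $N_{|Y}$ only if some retained leaf (or the root) lies below its reticulation vertex; otherwise both arcs are pruned back to the last retained attachment point and the cycle opens up into a tree. Concretely, take $n=5$ and let $N$ have a $4$-cycle with vertices $t,v_a,r,v_b$ in cyclic order, reticulation vertex $r$, leaf $a$ at $v_a$, leaf $b$ at $v_b$, out leaf $o$ at $r$, and a tree cherry $\{c,d\}$ hanging from $t$. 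Then $\{a,b\}$ is not a cherry of $N$ (the blob is an exterior blob of length $4$, not a reticulation cherry), you are in your Case 2 with $|C|=4$, and your recipe allows choosing $c$ (reachable through the third vertex $t$) and $d$ as "any remaining leaf"; but $N_{|\{a,b,c,d\}}$ is the quartet tree with split $\{a,b\}\mid\{c,d\}$, in which $\{a,b\}$ \emph{is} a cherry, so your witness fails. A correct witness must force the cycle to survive, e.g.\ by insisting that one of the two extra leaves be reachable through the out edge of the reticulation vertex of $C$ (such a leaf always exists), or by taking $c$ and $d$ through two distinct vertices of $C$ other than the attachment points of $a$ and $b$.

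It is worth noting that the paper sidesteps this pitfall entirely by not hunting for a single witness quarnet: it uses the positive information from \emph{every} quarnet containing $a$ and $b$ --- each such quarnet has a non-trivial cut edge separating $\{a,b\}$ from the other two leaves, which lifts to a cut edge of $N$ separating $\{a,b\}$ from $\{c,d\}$ --- and combines this, over all pairs $\{c,d\}$, with the fact that no non-trivial cut edge separates $a$ from $b$ to conclude that some cut edge of $N$ induces the split $\{a,b\}\mid X\setminus\{a,b\}$, i.e.\ a tree or reticulation cherry. Your contrapositive strategy is repairable as indicated above, but as written the justification for Case 2 is false and the argument does not go through.
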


\begin{proof}
If $ N $ has a cherry containing leaves $ a $ and $ b $, then it follows from the restriction process that $ a $ and $ b $ must be contained on a cherry in any quarnet they appear together. Therefore, there are $ \binom{n-2}{2} $ such quarnets.

Now assume that $ \binom{n-2}{2} $ quarnets have a cherry containing leaves $ a $ and $ b $. In this case, $ a $ and $ b $ are contained on a cherry in every quarnet where they appear together. This implies that no path in the network between $ a $ and $ b $ contains a non-trivial cut edge. Furthermore, any path between $ a $ and $ b $ and two other leaves must contain a non-trivial cut edge since there exists a quarnet with a non-trivial cut edge connecting $ a $ and $ b $ to the other leaves. Therefore, a cut edge in the network exists that partitions the leaf set into $ \{a,b\} $ and the remaining leaves, confirming that $ a $ and $ b $ are contained in a cherry.
\end{proof}

Consequently, all exterior structures of $ N $ can be identified from the complete quarnet set.

\subsection{Cherry-Blob Algorithm}
\label{cbAlg}

Now that we can identify exterior structures from quarnets, we need a method to order these structures so that an insertion sequence can be performed, attaching all structures to form $ N $.   In order to iterate the process of identifying exterior structures of a network, we introduce the quarnet refinement process in Algorithm~\ref{alg:CutAlg} to represent the quarnet set for the network resulting from cutting $ A_{S} $ from $ N $

\begin{algorithm}[]
\label{alg:CutAlg}
\DontPrintSemicolon
\KwInput{A complete set of quarnets $ \mathcal{Q}(N) $ and and exterior structure $A_S$ of $N$}
\KwOutput{A complete set of quarnets $ \mathcal{Q}(C(N,A_S)) $}

 Set $ \mathcal{Q}(C(N,A_S)) = \emptyset $\\
 \For{$q \in \mathcal{Q}(N)$}{
    \uIf{$ supp(q) \cap S = \emptyset $}{
        $ \mathcal{Q}(C(N,A_S)) = \mathcal{Q}(C(N,A_S)) \cup \{q\} $
    }
    \uElseIf{$ supp(q) \cap S = \{s\} $}{
        Define $ q' $ by taking the quarnet $ q $ and replacing $ s $ by $ a_S $\\
        $ \mathcal{Q}(C(N,A_S)) = \mathcal{Q}(C(N,A_S)) \cup \{q'\} $
    }
}
\Return{ $ \mathcal{Q}(C(N,A_S)) $}
\caption{Quarnet Refinement Procedure}
\end{algorithm}

The quarnet refinement procedure enables us to iterate through the process of cutting exterior structures one at a time until we are left with either a single quarnet or the quarnet set for a sunlet. A \emph{terminal quarnet set} is defined as a complete quarnet set that contains either a single quarnet or only square reticulations and quartet trees. We keep track of the exterior structures cut along the way, allowing us to construct an insertion sequence that begins with the base to build the parent network. Algorithm~\ref{alg:CBlob} formalizes this iterative process.

\begin{algorithm}[H]
\label{alg:CBlob}
\DontPrintSemicolon
\KwInput{A complete set of quarnets $ \mathcal{Q}(N) $}
\KwOutput{The network $ CherryBlob(\mathcal{Q}(N)) = N $}

Set $ \mathcal{Q} = \mathcal{Q}(N) $

Set $ i = 0 $

Set $T = \emptyset$

\While{ $ \mathcal{Q} $ is not a terminal quarnet set}{
    \While{there exists a cherry in $ \mathcal{Q} $ (Theorem~\ref{idCherries})}{
        Set $ i = i + 1 $\\
        Define the placeholder $ a_{S_i} $ for the cherry $ A_{S_i} $\\
        Append $T = [(A_{S_i}, a_{S_i}), T]$

        Set $ \mathcal{Q} = \mathcal{Q}(C(N,A_{S_i})) $ using Algorithm~\ref{alg:CutAlg}
    }
    Compute $ H(\mathcal{Q}) $

    Use color classes of $ H(\mathcal{Q}) $ to determine if there is an exterior blob

    \If{there exists an exterior blob}{
        Set $ i = i + 1 $

        Construct $A_{S_i}$ using Algorithm ~\ref{alg:SunConstruct}

        Define the placeholder $ a_{S_i} $ for a single exterior blob $ A_{S_i} $\\
        Append $T = [(A_{S_i}, a_{S_i}), T]$

        Set $ \mathcal{Q} = \mathcal{Q}(C(N,A_{S_i})) $ using Algorithm ~\ref{alg:CutAlg}
    }
}
\eIf{ $ \mathcal{Q} $ is a single quarnet }{
    Call the quarnet $ Z $
}{
    Use Algorithm~\ref{alg:SunConstruct} to find the sunlet defined by $ \mathcal{Q} $ and call it $ Z $
}

Perform the insertion sequence $ I(Z \mid (A_{S_k},a_{S_k}), \dots, (A_{S_1}, a_{S_1})) $ using base $ Z $, exterior structures $ \{A_{S_k}, \ldots, A_{S_1}\} $, and leaf set $ \{a_{S_k}, \ldots, a_{S_1}\} $ such that each $ A_{S_i} $ replaces leaf $ a_{S_i}$

\Return{$N= I(Z \mid T)$}

\caption{Cherry-Blob Algorithm}
\end{algorithm}

\begin{theorem}
Given a complete set of quarnets $ \mathcal{Q}(N) $ of a level-1 network $ N $, we have $ CherryBlob(\mathcal{Q}(N)) = N $.
\end{theorem}

\begin{proof}
Let $ \mathcal{Q}(N) $ be the complete set of quarnets of a level-1 network $ N $. The proof that $ CherryBlob(\mathcal{Q}(N)) = N $ uses induction on the number $ k $ of non-trivial cut edges of the network. If $ k = 0 $, then $ N $ is a sunlet. Since there are no cherries or exterior blobs in $ N $, we have $ CherryBlob(\mathcal{Q}(N)) = Sunlet(\mathcal{Q}(N)) $, and by Theorem~\ref{thm:sunlet}, $ Sunlet(\mathcal{Q}(N)) = N $.

Now assume $ N $ has $ k+1 $ cut edges. Then $ N $ must contain either a cherry or an exterior blob. If $ N $ contains a cherry, the leaves of the cherry will be identified (using Theorem~\ref{idCherries}) in the first sub-while loop of the Cherry Blob Algorithm, and the location of the placeholder leaf will be preserved when performing Algorithm~\ref{alg:CutAlg}.

If there is not a cherry but an exterior blob, the leaves will be identified in the if statement within the while loop (using Theorem~\ref{thm:extBlob}). In this case, referencing the incompatibility graph reduces the number of cases that need to be checked as potential exterior structures. The orientation of the blob will be identified using algorithm ~\ref{alg:SunConstruct} in the same step, and the placeholder leaf will be preserved when performing algorithm ~\ref{alg:CutAlg}.

In both cases, we have identified $ N = I(N', a_k, A_k) $ for the identified exterior structure $ A_k $. We note here that $ N' $ is a network with one fewer cut edge than $ N $; therefore, by our inductive hypothesis, $ CherryBlob(\mathcal{Q}(N')) = N' $. The claim follows by noting that $ \mathcal{Q}(N') $ is the quarnet refinement of $ (\mathcal{Q}, A_k) $.
\end{proof}

The following theorem demonstrates that the cherry-blob algorithm runs in polynomial time.
\begin{theorem}
The \textsc{Cherry-Blob Algorithm}, when given as input a complete set of quarnets $\mathcal{Q}(N)$ on $n$ leaves, runs in time $O(n^6)$ in the worst case.
\end{theorem}

\begin{proof}
We analyze the runtime of the algorithm by considering each phase of its execution.

\textbf{Initial Setup.}
The algorithm begins by setting $\mathcal{Q} = \mathcal{Q}(N)$, which we assume is provided as part of the input. Accessing or copying this data takes $O(n^4)$ time.

\textbf{Main While Loop.}
The algorithm proceeds while $\mathcal{Q}$ is not a terminal quarnet set. Checking if  $\mathcal{Q}$ is terminal is $O(n^4)$ as it checks if each quarnet is a square. Each iteration either reduces the number of taxa by removing a cherry or reduces an exterior blob, both of which strictly decrease the size of the network. Thus, the outer loop iterates at most $n$ times.

We examine the two primary operations that can occur in each iteration:

\textbf{Cherry Reductions.}
Within the inner \texttt{while} loop, the algorithm identifies a cherry and updates the quarnet set to reflect its contraction. Identifying cherries requires examining all $O(n^4)$ quarnets and checking whether any pair of taxa appear in $\binom{n-2}{2} = O(n^2)$ such quarnets, so this step takes $O(n^4)$ time. 

Updating $\mathcal{Q}$ after a cherry contraction involves recomputing or filtering the quarnets and takes $O(n^4)$ time. Hence, each cherry reduction iteration takes $O(n^4)$ time.

Since there can be at most $n$ cherry reductions (each removing one taxon), the total cost of all cherry reductions is $O(n \cdot n^4) = O(n^5)$.

\textbf{Exterior Blob Reductions.}
If no cherry exists, the algorithm constructs a graph $H(\mathcal{Q})$ on $n$ vertices, where each quarnet may induce up to 4 edges. Hence, constructing $H(\mathcal{Q})$ requires $O(n^4)$ time.

Greedy coloring of this graph can be performed in $O(v + e) = O(n + n^2) = O(n^2)$ time. For each color class (at most $n$), we check whether it supports an exterior blob. This involves checking all quarnets that include 3 taxa from the color class and 1 from outside, totaling $O(n^4)$ checks per color class. Thus, all color classes are checked in $O(n \cdot n^4) = O(n^5)$ time.

If an exterior blob is found, Algorithm~\ref{alg:CutAlg} is used to refine the quarnet set, which again takes $O(n^4)$ time.

Hence, each blob reduction iteration takes $O(n^5)$ time in the worst case. Since the main loop runs at most $n$ times, the total cost of all blob reductions is $O(n \cdot n^5) = O(n^6)$.

\textbf{Sunlet Construction.}
Once the main loop terminates, either $\mathcal{Q}$ is a single quarnet (trivial case) or a sunlet network is constructed using Algorithm~\ref{alg:SunConstruct}. Each of the $O(n)$ steps in sunlet construction involves scanning at most $O(n^4)$ quarnets and checking if pairs of leaves are pendant on a sunlet which can be done in $O(n)$ steps since the number of leaves in a sunlet network are a linear function of the number of edges. The appropriate insertion of the new leaf into the sunlet network is completed in a constant number of steps. Thus the total cost is $n \times O(n^4 \cdot n) = O(n^6)$.

\textbf{Insertion Sequence.}
Finally, the algorithm performs an insertion sequence on the base sunlet $Z$ using the exterior structures. If the network is stored with adjacency lists or matrices requiring $O(n^2)$ space, each of the $O(n)$ insertions takes $O(n^2)$ time, giving a total of $O(n \cdot n^2) = O(n^3)$ for this step.

\textbf{Conclusion.}
The total runtime of the algorithm is:
\begin{align*}
O(n^4) \text{ (setup)} + O(n^5) &\text{ (check terminal)} + O(n^5) \text{ (cherry reductions)} \\ + \ O(n^6) \text{ (blob reductions) } + &O(n^6) \text{ (sunlet construction) } + O(n^3) \text{ (insertions) }\\ &= O(n^6).
\end{align*}

\end{proof}


\section*{Competing interests}
The authors declare that they have no competing interests.

\section*{Author's contributions}
All authors helped design and implement this project. All authors helped draft the manuscript and have read and approved the final manuscript.
\section*{Acknowledgements}
 The authors wish to thank Yifei Tao who made significant contributions to an earlier version of this manuscript. This material is based upon work supported by the National Science Foundation under Grant No. DMS-1757616 and Grant No. DMS-1616186. J.R. was supported by the National Science Foundation under Grant No. DMS-1929284 while in residence at the Institute for Computational and Experimental Research in Mathematics in Providence, RI, during the Theory Methods and Applications of Quantitative Phylogenomics program.

\bibliographystyle{plain} 
\bibliography{references.bib}      

\end{document}